\newtheorem{theorem}{Theorem}
\newtheorem{remark}[theorem]{Remark}
\newtheorem{example}[theorem]{Example}
\newtheorem{definition}[theorem]{Definition}
\newcommand{\cmark}{{\color{green}\ding{52}}}%
\newcommand{\xmark}{{\color{red}\ding{56}}}%
\newcommand{\Alice}{Jannik\xspace}
\newcommand{\A}{J\xspace}
\newcommand{\Bob}{Jean-Guillaume\xspace}
\newcommand{\B}{J-G\xspace}
\newcommand{\Carole}{Pascal\xspace}
\newcommand{\C}{P\xspace}
\newcommand{\Dan}{Xavier\xspace}
\newcommand{\X}{X\xspace}
\algnewcommand\algorithmicparfor{\textbf{parfor}}
\algnewcommand\algorithmicpardo{\textbf{do}}
\newcommand\To{\textbf{to}\ }
\algnewcommand\algorithmicendparfor{\textbf{end\ parfor}}
\renewcommand{\ALG@name}{Protocol}
\algnewcommand{\IIf}[2]{\State{#1}\algorithmicif\ #2\ \algorithmicthen}
\algnewcommand{\EndIIf}{\unskip\ \algorithmicend\ \algorithmicif}
\DeclareMathOperator{\meq}{\ensuremath{\mathrel{{-}{=}}}}
\DeclareMathOperator*{\argmin}{arg\,min}
\DeclareMathOperator*{\argmax}{arg\,max}
\newcommand{\sharingExpenses}{Shared Expenses}
\newcommand{\sharingExpensesProblem}{\sharingExpenses{} Problem}
\newcommand{\SEP}{SEP}
\newcommand{\NP}{\ensuremath{\mathcal{NP}}}
\newcommand{\thetitle}{A Faster Cryptographer's Conspiracy Santa}
\newcommand{\OpenDreamKit}{the \href{http://opendreamkit.org}{OpenDreamKit} \href{https://ec.europa.eu/programmes/horizon2020/}{Horizon 2020} \href{https://ec.europa.eu/programmes/horizon2020/en/h2020-section/european-research-infrastructures-including-e-infrastructures}{European Research Infrastructures} project (\#\href{http://cordis.europa.eu/project/rcn/198334_en.html}{676541})}
\newcommand{\funding}{This research was conducted with the support of the FEDER program of
2014-2020, the region council of Auvergne-Rh\^one-Alpes, the support
of the ``Digital Trust'' Chair from the University of Auvergne
Foundation, the Indo-French Centre for the Promotion of Advanced
Research (IFCPAR), the Center Franco-Indien Pour La Promotion De La
Recherche Avanc\'ee (CEFIPRA) through the project DST/CNRS 2015-03
under DST-INRIA-CNRS Targeted Programme, 
and \OpenDreamKit.}
\newcommand{\acknowledgements}{Many thanks to Marie-B\'eatrice,
  Anne-Catherine, Marc, Jacques and Luc for such great conspiracy
  Santas! A big thanks also to the Gilbert family for having big
  instances of the \sharingExpensesProblem{} problem regularly, and to
  Cyprien for asking the question of its complexity. More thanks go to
  Mathilde and Gw\'ena\"el for the discussions on efficient algorithms.}
\begin{document}

\date{}

\title{\thetitle\footnote{\funding}}

\author{Xavier Bultel\footnotemark[2]
\and Jannik Dreier \footnotemark[3]
\and Jean-Guillaume Dumas \footnotemark[4]
\and Pascal Lafourcade\footnotemark[5]}
\maketitle

\footnotetext[2]{LIFO, INSA Centre Val de Loire, Universit\'e d'Orl\'eans,
88 boulevard Lahitolle
CS 60013 - 18022 Bourges, cedex, France}
\footnotetext[3]{Université de Lorraine, CNRS, Inria, LORIA, F-54000
  Nancy, France}
\footnotetext[4]{Universit\'e Grenoble Alpes, Laboratoire Jean Kuntzmann,
  UMR CNRS 5224, 700~avenue centrale, IMAG - CS 40700, 38058 Grenoble,
  cedex 9, France}
\footnotetext[5]{University Clermont
  Auvergne, LIMOS, CNRS UMR 6158, Campus Universitaire des Cézeaux, 1
  rue de la Chebarde, 63170 Aubi\`ere, France}

\begin{abstract}
In Conspiracy Santa, a variant of Secret Santa, a group of people
offer each other Christmas gifts, where each member of the group
receives a gift from the other members of the group. 
To that end, the members of the group form conspiracies, to decide on
appropriate gifts, and usually divide the cost of each gift among all
participants of that conspiracy.
This requires to settle the shared expenses per conspiracy, 
so Conspiracy Santa can actually be seen as an aggregation of 
several shared expenses problems.

First, we show that the problem of finding a minimal number of
transaction when settling shared expenses is NP-complete. Still, there
exist good greedy approximations.  Second, we present a greedy
distributed secure solution to Conspiracy Santa. This solution allows
a group of $n$ people to share the expenses for the gifts in such a
way that no participant learns the price of his gift, but at the same
time notably reduces the number of transactions to $2 \cdot n+1$ with respect
to a na\"ive aggregation of $n \cdot (n-2)$.  Furthermore, our solution does
not require a trusted third party, and can either be implemented
physically (the participants are in the same room and exchange money
using envelopes) or, over Internet, using a cryptocurrency.

\end{abstract}



\section{Introduction}

\emph{Secret Santa} is a Christmas tradition, where members of a group
are randomly assigned to another person, to whom they have to offer a
gift. The identity of the person offering the present is usually
secret~\cite{Liberti:2008:secretsanta,Mauw:2014:ssanta,Ryan2016}, as well as the price of the present. Moreover, the participants
often determine a common bound for the gift's prices.

In \emph{Conspiracy Santa}, a variant of Secret Santa, for each
participant, the other members of the group collude and jointly decide
on an appropriate gift.  The gift is then usually bought by one of the
colluding participants, and the expenses are shared among the
colluding participants.

In this setting, the price of the gift must remain secret and,
potentially, also who bought the present.  At the same time, sharing
the expenses usually results in numerous transactions. Existing
results in the literature
(e.g.,~\cite{settleup,splitwise,tricount,Thesis12}) aim at minimizing
the number of transactions, but they assume that all expenses are
public, that all participants are honest, and that communications are
safe. Our goal is instead to propose a secure Conspiracy Santa algorithm for
paranoid cryptographers that do not want to disclose the prices.
Further, they want to keep their privacy at all cost, so they might rely on
external third parties but only in case they do not need to trust them.

\subsection{Contributions}

Our results can be split into the following $3$  contributions:
\begin{itemize}
\item We show that the general problem of finding a solution with a
  minimal number of transactions when sharing expenses
  (\sharingExpensesProblem{}, or \SEP{}) is NP-complete.
\item We provide secure protocols for Conspiracy Santa for arbitrary
  $n$ participants. The algorithms ensure that no participant learns
  the price of his gift, nor who bought it.  Moreover, the algorithms
  reduce the number of transactions necessary to $3 \cdot n$ or $2 \cdot n+1$
  (depending on the largest authorized amount for a given transaction)
  compared to a na\"ive solution of $n \cdot (n-2)$.
\item Our secure algorithms are entirely distributed and do not
  require any trusted third party. To also realize the payments in a
  distributed fashion, a secure peer-to-peer cryptocurrency can be
  used.  Additionally, we present a physical payment solution where all
  participants need to be in the same place, using envelopes and bank
  notes.
\end{itemize}
Our algorithms can also be used in the case where expenses are shared
within multiple groups. There, some people belong to several of these
groups and the goal is to reduce the number of transactions while still
ensuring privacy: all participants only learn about the expenses
of their groups, not the other groups. 
One can also see this problem as a variant of the dining
cryptographers~\cite{Chaum1988}. However, instead of respecting the
cryptographers' right to anonymously invite everybody, we here want to
respect the cryptographers' right to privately share expenses of multiple
diners with different groups.

Table~\ref{tab:algos} summarizes the number of transactions, as well
as the order of magnitude of the largest amount per transaction, required
for the different algorithms considered in order to realize a
conspiracy Santa.

\begin{table}[htbp]\centering
\begin{tabular}{|l|c|r|r|}
\hline
\multirow{2}{*}{Algorithm} & \multirow{2}{*}{Peer-to-peer} &  \multirow{2}{*}{Transactions} & Largest amount \\
 &  &   & per transaction \\
\hline
$n$ instances of \SEP{} & \cmark & $n\cdot(n-2)$ & constant\\
With a trusted third-party & \xmark & $n$ & constant\\
Here (Protocols~\ref{alg:setup}, \ref{alg:firstround},
  \ref{alg:secondround}) & \cmark & $3 \cdot n$ & constant\\
Here (Protocols~\ref{alg:setup}, \ref{alg:merged},
  \ref{alg:onlythird}) & \cmark & $2 \cdot n+1$ & linear\\
\hline
\end{tabular}
\caption{Number of transactions for Conspiracy Santa}\label{tab:algos}
\end{table}

With respect to the conference version of this
paper~\cite{fun:2018:conspiracy}, we here provide the faster algorithm
requiring only $2\cdot n+1$ transactions, instead of $3\cdot n$. 
With respect to an upper bound $B$ on the price of any gift, in this faster
protocol, the amount of each transaction can then grow up to $n\cdot B$,
where it stayed below $B$ in our other protocol.
We also provide a physical variant as well as complexity and security
proofs for this novel protocol.

\subsection{Outline}

The remainder of the paper is structured as follows: in
Section~\ref{sec:complexity}, we analyze the complexity of the general
problem of sharing expenses.  In Section~\ref{sec:conspiracy}, we
present our protocol to solve the problem of privately sharing
expenses in Conspiracy Santa, in a peer-to-peer setting.  We also
discuss further applications of our solution, and how to realize the
anonymous payments required by the algorithm, either physically or
online.  In Section~\ref{sec:faster}, we then present our second
algorithm, with less transactions.
We finally conclude in Section~\ref{sec:conclusion}.%

\section{The \sharingExpensesProblem{} and its Complexity}\label{sec:complexity}

Before analyzing the Conspiracy Santa problem in more detail, we first
discuss the problem of settling shared expenses with a
minimal number of transactions.  This problem frequently arises, for
example when a group of security researchers attends a FUN conference
and wants to share common expenses such as taxis, restaurants etc.
Reducing the overall number of transactions might then reduce the
overall currency exchange fees paid by the researchers.

In such a case, each participant covers some of the common expenses, and in the end of the conference, some transactions are necessary to ensure that all participants payed the same amount.
Note for this first example, there are no privacy constraints, as all amounts are public.
\begin{example}\label{ex:fun16}
\Alice, \Bob, and \Carole attended FUN'16. The first night, \Alice payed
the restaurant for $155$ \euro, and \Bob the drinks at the bar for $52$
\euro. The second day \Carole payed the restaurant and drinks for a
total of $213$ \euro.

The total sum is then $155+52+213=420$ \euro, meaning $140$ \euro{} per
person.  This means that \Alice payed $140-155=-15$ \euro{} too much,
\Bob needs to pay $140-52 = 88$ \euro{} more, and \Carole has to receive
$140-213 = -73$ \euro.  In this case, the optimal solution uses two
transactions: \Bob gives $15$ \euro{} to \Alice, and $73$ \euro{} to
\Carole.
\end{example}
There are numerous applications implementing solutions to this problem (e.g., \cite{settleup,splitwise,tricount}), but it is unclear how they compute the transactions.
Moreover, in these applications all expenses are public, making them unsuitable for Conspiracy Santa.

David V{\'a}vra wrote a master's thesis~\cite{Thesis12} about a
similar smartphone application that allows to settle expenses within a
group.
He discusses a greedy approximation algorithm (see below), and
conjectures that the problem is \NP-complete, but without giving a
formal proof. We start by formally defining the problem.
\begin{definition}{\sharingExpensesProblem{} (\SEP{}).}
  \begin{description}
  \item[Input:] Given a multiset of values $K = \{k_1, \ldots, k_n\}$ such that
$\sum_{i=1}^{n} k_i = 0$, where a positive $k_i$ means that
participant $i$ has to pay money, and a negative $k_i$ means that $i$
has to be reimbursed. 
\item[Question:] Is there a way to do all reimbursements using (strictly) less than $n-1$ transactions?
  \end{description}
\end{definition}
Note that there is always a solution using $n-1$ transactions using a
greedy approach: given the values in $K = \{k_1, \ldots, k_n\}$, let
$i$ be the index of the maximum value of $K$ ($i= \argmax_i(k_i)$) and
let $j$ be the index of the minimum value of $K$ ($j=
\argmin_j(k_j)$), we use one transaction between $i$ and $j$ such that
after the transaction either the participant $i$ or $j$ ends up at
$0$. I.e., if $|k_i|-|k_j|>0$, then the participant $j$ ends up at $0$,
otherwise the participant $i$ ends up at $0$. 
By then recursively applying
the same procedure on the remaining $n-1$ values, we can do all
reimbursements. Overall, this greedy solution uses
$n-1$ transactions in the worst case.

It is easy to see that $\SEP \in \NP$: guess a list of (less than $n-1$) transactions, and verify for each participant that in the end there are no debts or credits left.

We show that \SEP{} is \NP{}-complete, for this we use a reduction
from the \emph{Subset Sum Problem}~\cite{Karp2010} which can be seen as
a special case of the well known knapsack
problem~\cite{Garey:1990:CIG:574848}.
\begin{definition}{Subset Sum Problem (SSP)}
\begin{description}
\item[Input:] Given a multiset of values $K = \{k_1, \ldots, k_n\}$.
\item[Question:] Is there a subset $K' \subseteq K$ such that
  $\sum_{k' \in K'} k' = 0$?
\end{description}
\end{definition}
The Subset Sum Problem is known to be \NP{}-complete (see, e.g., \cite{Cormen}).
\begin{theorem}{}
The \sharingExpensesProblem{} is \NP{}-complete.
\end{theorem}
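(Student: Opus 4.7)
Membership of SEP in $\NP$ is already noted just before the theorem statement. The plan is to prove NP-hardness by a polynomial-time reduction from SSP. The central structural observation I would establish as a warm-up lemma is the following: a SEP instance of size $n$ admits a solution with strictly fewer than $n-1$ transactions if and only if the input multiset $K$ has a proper non-empty sub-multiset summing to zero. For the easy direction, given such a sub-multiset $K'$, the complement $K\setminus K'$ also sums to zero, and running the greedy algorithm from the excerpt separately on $K'$ and on $K\setminus K'$ produces at most $(|K'|-1)+(|K\setminus K'|-1)=n-2$ transactions. For the other direction, view the transactions as edges of a graph on the $n$ participants; fewer than $n-1$ edges force this graph to be disconnected, and conservation of money inside any connected component $C$ forces $\sum_{i\in C} k_i = 0$, so $C$ is a proper non-empty zero-sum sub-multiset.

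The reduction itself is then very short. Given an SSP instance $K=\{k_1,\dots,k_n\}$, I would compute $S=\sum_i k_i$ in polynomial time and output the SEP instance $K^\star = K \cup \{-S\}$ of size $n+1$, whose elements sum to zero. If SSP has a witness $T\subseteq K$ with $T\neq\emptyset$ and $\sum T = 0$, then $T$ is a proper non-empty zero-sum sub-multiset of $K^\star$ (proper because the extra element $-S$ is not in $T$), so by the lemma SEP on $K^\star$ is a yes-instance. Conversely, if $K^\star$ has a proper non-empty zero-sum sub-multiset $U$, then either the extra element $-S$ is not in $U$, in which case $U$ itself witnesses SSP on $K$, or it is, in which case $K^\star \setminus U \subseteq K$ is non-empty (since $U$ is a proper subset) and sums to $0-0=0$, witnessing SSP.

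The step I expect to be the most delicate is the connected-component direction of the structural lemma: one has to argue carefully that transactions can only move money within a connected component, so the net debt inside that component must already be zero, which is the precise place where the combinatorial bound $n-1$ enters. Once this lemma is in place, the reduction, its polynomial size, and its correctness in both directions are essentially a bookkeeping exercise, and combining it with SSP's $\NP$-hardness yields the theorem.
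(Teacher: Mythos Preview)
Your proof is correct and follows essentially the same route as the paper: the same reduction from SSP by appending $-S$, the same greedy-split argument for the forward direction, and the same connectivity/conservation argument for the backward direction. Your packaging is slightly cleaner---by isolating the zero-sum-sub-multiset characterization as a standalone lemma you avoid the paper's separate treatment of the case $\sum_{k\in K} k = 0$.
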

\begin{proof}
Consider the following reduction algorithm:

Given a Subset Sum Problem (SSP) instance, i.e., a multiset of values
$K = \{k_1, \ldots, k_n\}$, compute $s = \sum_{k \in K} k$.  If $s =
0$ then return yes, otherwise let $K' = K \cup \{-s\}$ and return the
answer of an oracle for the \sharingExpensesProblem{} for $K'$.

It is easy to see that the reduction is polynomial, as computing the
sum is in $\mathcal{O}(n)$. We now need to show that the reduction is
correct. We consider the two following cases:

\begin{itemize}
\item  Suppose the answer to the SSP is yes, then there is a subset $K''
\subseteq K$ such that $\sum_{k \in K''} k = 0$.  If $K'' = K$, then
the check in the reduction is true, and the algorithm returns yes.  If
$K'' \neq K$, then we can balance the expenses in the sets $K''$ and
$K' \setminus K''$ independently using the greedy algorithm explained
above.  This results in $|K''| - 1$ and $|K'| - |K''| -1$ transactions
respectively, for a total of $|K'| - |K''| - 1 + |K''| - 1 = |K'|-2 <
|K'|-1$ transactions.  Thus there is a way to do all reimbursements
using strictly less than $|K'|-1$ transactions, hence the answer will be yes.
\item Suppose the answer to the SSP is no, then there is no subset
  $K'' \subseteq K$ such that $\sum_{k \in K''} k = 0$.  This means
  that there is no subset $K_3 \subseteq K'$ such that the expenses
  within this set can be balanced independently of the other expenses.
  To see this, suppose it were possible to balance the expenses in
  $K_3$ independently, then we must have $\sum_{k \in K_3} k = 0$,
  contradicting the hypothesis that there is no such subset (note that
  w.l.o.g. $K_3 \subseteq K$, if it contains the added value one can
  simply choose $K' \setminus K_3$).

Hence any way of balancing the expenses has to involve all $n$
participants, but building a connected graph with $n$ nodes requires
at least $n-1$ edges.  Thus there cannot be a solution with less than
$n-1$ transactions, and the oracle will answer no. 
\end{itemize}
 \qed\end{proof}

\section{Cryptographer's Conspiracy Santa}\label{sec:conspiracy}

Consider now the problem of organizing Conspiracy Santa, where no participant shall learn the price of his gift.
Obviously we cannot simply apply, e.g., the greedy algorithm explained above on all the expenses, as this would imply that everybody learns all the prices.

More formally, an instance of Conspiracy Santa with $n$ participant 
consists of $n$ shared expenses problem (sub-\SEP), each with $n-1$
participants and with non-empty intersections of the participants. 
In each sub-\SEP, the
$n-1$ participants freely discuss, decide on a gift, its value $v_i$ and who
pays it; then agree that their share for this gift is
$v_i/(n-1)$. Overall the share of each participant $j$ is
$$\frac{\sum_{i=1,i\neq{}j}^n v_i}{n-1}.$$
A participants \emph{balance} $p_j$ is this share minus the values of the gifts he bought.

A simple solution would be to use a trusted third party, but most
cryptographers are paranoid and do not like trusted third parties.
A distributed solution would be to settle the expenses for each gift
within the associated conspiracy group individually, but this then
results in $n$ instances of the problem, with $n-2$ transactions each
(assuming that only one person bought the gift), for a total of
$n\cdot (n-2)$ transactions.

Moreover, the problem becomes more complex if several groups
with non-empty intersections want to minimize transactions all
together while preserving the inter-group privacy. 

\begin{example}\label{ex:group}
  \emph{Example~\ref{ex:fun16} continued.}
  For the same conference, FUN'16, \Alice, \Bob and \Dan
  shared a taxi from the airport and \Bob paid for a total of
  $60$\euro, that is $20$\euro{} per person. 
  There are two possibilities. Either \Alice and \Dan make two new
  transactions to reimburse \Bob. Or, to minimize the overall number of
  transactions, they aggregate both accounts, i.e. those from
  Example~\ref{ex:fun16} with those of the  
  taxi ride. That is 
  $[-15,88,-73,0]+[20,-40,0,20]=[5,48,-73,20]$.
  Overall \Alice thus gives $5$ \euro{} to
  \Carole, \Bob reduces his debt to \Carole to only $48$\euro{} and
  \Dan gives $20$ \euro{} to \Carole.
  The security issue, in this second case, is that maybe \Alice and
  \Bob did not want \Dan to
  know that they were having lunch with \Carole, nor that they had a
  debt of more than $20$ \euro, etc.
\end{example}
In the next part, we present our solution for the generalization of
Conspiracy Santa as the aggregation of several shared expenses
problems with non-empty intersections between the participants. This
solution uses $3 \cdot n$ transactions, preserves privacy, and does not
require a trusted third party.

\subsection{A Distributed Solution using Cryptocurrencies}

\paragraph*{Assumptions}
\begin{enumerate}
\item We suppose that all participants know a {\bf fixed upper bound} $B$ for the
value of any gift.  
\item We suppose that each participant \textbf{balance is an integral
  number}. A simple solution for this is to express everything in cents
  (\cent), and make users agree that shares could be unevenly
  distributed up to a difference of one cent.
\item We consider \textbf{semi-honest} participants in the sense that the participants follow \emph{honestly} the protocol, but they try to exploit all intermediate information that they have received during the protocol to break privacy.
\end{enumerate}

Apart from the setup, the protocol has $3$ rounds,
each one with $n$ transactions, and one initialization phase.

\paragraph*{Initialization Phase}
In the setup phase, the participants learn the price of the gifts in
which they participate and can therefore compute their overall
balance, $p_i$. They also setup several anonymous addresses in a given
public transaction cryptocurrency like Bitcoin~\cite{bitcoin},
ZCash~\cite{zcash} or Monero~\cite{monero}.

Finally the participants create one anonymous address which is used as
a piggy bank. They all have access to the secret key associated to
that piggy bank address. For instance, they can exchange encrypted
emails to share this secret
key. Protocol~\ref{alg:setup} presents the details of this setup
phase.

\begin{algorithm}[htb]
\caption{\SEP{} broadcast setup}\label{alg:setup}
\begin{algorithmic}[1]
  \Require An upper bound $B$ on the value of any gift;
  \Require All expenses.
  \Ensure Each participant learns his balance $p_i$.
  \Ensure Each participant creates $1$ or several anonymous currency
  addresses.
  \Ensure A shared anonymous currency address.
  \State One anonymous currency address is created and the associated
  secret key is shared among all participants.
  \For{each exchange group}
	\For{each payment within the group}
		\State broadcast the amount paid to all members of the group;
	\EndFor
  	\For{each participant in the group}
		\State Sum all the paid amounts of all the participants;
		\State Divide by the number of participants in the group;
		\State This produces the in-group share by participant.
	\EndFor
  \EndFor
  \For{each overall participant}
	\State Add up all in-group shares;
	\State Subtract all own expenses to get $p_i$;
	\If{$p_i<0$}
		\State Create $\lfloor\frac{p_i}{B}\rfloor$ anonymous
                currency addresses.
        \EndIf
  \EndFor
\end{algorithmic}
\end{algorithm}

\paragraph*{First Round}
The idea is that the participants will round their debts or credits so
that the different amounts become indistinguishable.  For this, the
participants perform transactions to adjust their balance to either
$0$, $B$ or a negative multiple of $B$.  The first participant
randomly selects an initial value between $1$ and $B$~\cent{}, and
sends it to the second participant.  This transaction is realized via
any private payment channel between the two participants. It can be a
physical payment, a bank transfer, a cryptocurrency payment, \ldots,
as long as no other participant learns the transferred amount.  Then
the second participant adds his balance to the received amount modulo
$B$, and forwards the money\footnote{up to $B$, or such that its
  credit becomes a multiple of $B$.} to the next participant, and so
on.  The last participant also adds his balance and sends the
resulting amount to the first participant.  In the end, all
participants obtain a balance of a multiple of $B$, and the random
amount chosen by the first participant has hidden the exact amounts.
The details are described in Protocol~\ref{alg:firstround}.

\begin{algorithm}[htb]
\caption{Secure rounding to multiple of the bound}\label{alg:firstround}
\begin{algorithmic}[1]
  \Require An upper bound $B$ on the value of any gift;
  \Require Each one of $n$ participants knows his balance $p_i$;
  \Require $\sum_{i=1}^n p_i=0$.
  \Ensure Each one of $n$ participants has a new balance $p_i$, either
  $0$, $B$ or a negative multiple of $B$;
  \Ensure  $\sum_{i=1}^n p_i=0$;
  \Ensure Each transaction is between $1$ and $B$ \cent;
  \Ensure The protocol is zero-knowledge.
  \State{$P_1$}: $t_1\stackrel{\$}{\longleftarrow}\left[1..B\right]$
  uniformly sampled at random;
  \State{$P_1$}: $p_1 = p_1 - t_1$;
  \State $P_1$ sends $t_1$ \cent{} to $P_2$; \hfill \Comment{Random
    transaction $1..B$ on a secure channel} \label{i:tfromp}
  \State{$P_2$}: $p_2 = p_2 + t_1$;
  \For{$i=2$ \To $n-1$}
  	\State{$P_i$}: $t_i = p_i \mod B$; \label{i:piti}
	\IIf{$P_i$: }{$t_i = 0$}{ $t_i = t_i + B$;}\EndIIf\hfill\Comment{$1 \leq t_i \leq B$} 
	\State{$P_i$}: $p_i = p_i - t_i$;
	\State $P_i$ sends $t_i$ \cent{} to $P_{i+1}$; \hfill
        \Comment{Random transaction $1..B$ on a secure channel}\label{i:tfrompn}
	\State{$P_{i+1}$}: $p_{i+1} = p_{i+1} + t_i$;
  \EndFor
  \State{$P_n$}: $t_n = p_n \mod B $;
  \IIf{$P_n$: }{$t_n = 0$}{ $t_n = t_n + B$;}\EndIIf\hfill\Comment{$1 \leq t_n \leq B$}
  \State{$P_n$}: $p_n = p_n - t_n$;
  \State $P_n$ sends $t_n$ \cent{} to $P_1$; \hfill \Comment{Random transaction $1..B$ on a secure channel} \label{i:ttop}
  \State{$P_1$}: $p_1 = p_1 + t_n$;
\end{algorithmic}
\end{algorithm}

\paragraph*{Second Round}
The second and third rounds of the protocol require anonymous
payments, for which we use anonymous cryptocurrency addresses. These
two rounds are presented in Protocol~\ref{alg:secondround}, where
$parfor$ is a notation for parallel execution of a $for$ command where
the order is not important.
\begin{algorithm}[htb]
\caption{Peer-to-peer secure debt resolution}\label{alg:secondround}
\begin{algorithmic}[1]
  \Require An upper bound $B$ on the value of any gift;
  \Require $n$ participants each with a balance $p_i$, either
  $0$, $B$ or a negative multiple of $B$.
  \Ensure All balances are zero;
  \Ensure The protocol is zero-knowledge.
  \ParFor{$i=1$ \To $n$}\Comment{Everybody sends $B$ to the piggy bank}
	\State{$P_i$}: $p_i \meq B$;
        \State $P_i$ sends $B$ \cent{} to the shared anonymous address; \label{i:put}
        	\hfill \Comment{Public transaction of $B$}
  \EndParFor
  \ParFor{$i=1$ \To $n$}
	\If{$p_i<0$}\Comment{Creditors recover their assets}
  		\ParFor{$j=1$ \To $\frac{-p_i}{B}$}
        		\State $P_i$ makes the shared anonymous address
                        pay $B$\cent{} to one of his own anonymous
                        addresses; 
                        		\hfill \Comment{Public transaction of $B$}\label{i:take}
  		\EndParFor
        	\State{$P_i$}: $p_i = 0$.
        \EndIf
  \EndParFor
\end{algorithmic}
\end{algorithm}
In the second round, every participant makes one public transaction of
$B$ \cent{} to the piggy bank.

\paragraph*{Third Round}

Each creditor recovers their assets via
$\lfloor\frac{p_i}{B}\rfloor$ public transactions of $B$ \cent{} from
the piggy bank. Note that if a participant needs to withdraw more than
$B$ \cent{} he needs to perform several transactions.  To ensure anonymity,
he needs to use a different anonymous address for each transaction.
In the end, the account is empty and the number of transactions
corresponds exactly to the number of initial transactions used to
credit the piggy bank's account.

\begin{theorem}\label{thm:threen} For $n$ participants,
  Protocols~\ref{alg:setup}, \ref{alg:firstround},
  \ref{alg:secondround} are correct and, apart from the setup, require
  $3\cdot n$ transactions. 
\end{theorem}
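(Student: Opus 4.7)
The plan is to split the argument into two parts: correctness (every participant's balance is driven to $0$) and the exact count of $3\cdot n$ monetary transactions across Protocols~\ref{alg:firstround} and~\ref{alg:secondround}. Protocol~\ref{alg:setup} is a routine initialization---each participant simply learns his own balance $p_i$ from the broadcast expenses and sets up anonymous addresses---so the real work lies in the two payment rounds.

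For correctness, I first observe the invariant that $\sum_{i=1}^{n} p_i = 0$ is preserved throughout Protocol~\ref{alg:firstround}, since every transaction merely shifts cents between two participants. I would then show that after Protocol~\ref{alg:firstround}, every $p_i$ lies in $\{0, B\} \cup \{-B, -2B, \ldots\}$. By construction, $t_i \in [1,B]$ satisfies $t_i \equiv p_i + t_{i-1} \pmod{B}$ (with $t_i = B$ when the residue vanishes), so the value $p_i + t_{i-1} - t_i$ written back into $p_i$ is always a multiple of $B$; only the upper bound requires care. Here I would invoke the Conspiracy Santa setup: each initial $p_i$ is at most $B$, since a participant's share equals the average of $n-1$ gift values each bounded by $B$ (hence $\leq B$) and paying for gifts only decreases the balance. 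Consequently $p_i + t_{i-1} \leq 2B$, forcing the new $p_i \leq B$. A symmetric short argument handles $P_1$, whose final balance lies in $[p_1^{\mathrm{init}}-B+1,\,p_1^{\mathrm{init}}+B-1]$ but must be a multiple of $B$ by the sum invariant, hence is again at most $B$.

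Given this structure, Protocol~\ref{alg:secondround} proceeds transparently. After its first \textbf{parfor}, every balance has decreased by $B$, so any $P_i$ previously at $B$ is settled to $0$ and all others hold a strictly negative multiple of $B$; the piggy bank now contains $n\cdot B$~\cent. Since no residual balance is positive and the sum of balances is $-n\cdot B$, the negative balances sum to exactly $-n\cdot B$. In the second \textbf{parfor}, each creditor $P_i$ performs $-p_i/B$ withdrawals of $B$~\cent, for a total of $\sum_{\mathrm{creditors}}(-p_i)/B = n$ withdrawals, which simultaneously zeros every balance and empties the piggy bank.

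The transaction count then follows directly: Protocol~\ref{alg:firstround} contributes $n$ transactions (one per participant, including the closing transfer from $P_n$ to $P_1$), and Protocol~\ref{alg:secondround} contributes $n$ deposits together with, as just argued, exactly $n$ withdrawals, giving $3\cdot n$ in total. The main obstacle is the upper-bound step in Protocol~\ref{alg:firstround}: without the initial bound $p_i \leq B$, a participant could end the round with a positive multiple of $B$ strictly greater than $B$, which would survive the uniform $B$-debit of Protocol~\ref{alg:secondround} and leave both a nonzero residual balance and a nonempty piggy bank. Stating this bound explicitly---it is inherited from the Conspiracy Santa structure---is what ties the correctness and counting arguments together.
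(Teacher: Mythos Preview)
Your proposal is correct and tracks the paper's own proof closely: both arguments rest on the zero-sum invariant, the bound $p_i\leq B$ inherited from the Conspiracy Santa structure, the observation that Protocol~\ref{alg:firstround} leaves every balance a multiple of $B$ no larger than $B$ (with $P_1$ handled via the invariant), and the consequence that the piggy bank is filled and emptied with exactly $n$ transactions each. If anything, your write-up is slightly more explicit than the paper's in treating creditors and debtors uniformly and in deriving the withdrawal count from $\sum_{p_i<0}(-p_i)/B=n$ rather than just from the piggy bank's contents, but the substance is the same.
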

\begin{proof}
Including the piggy bank, all the transactions are among participants,
therefore the sum of all the debts and credits is invariant and
zero. There remains to prove that in the end of the protocol all the
debts and credits are also zero.  The value of any gift is bounded by
$B$, thus any initial debt for any gift is at most $B/(n-1)$. As
participants participate to at most $n-1$ gifts, the largest debt is
thus lower than $B$ \cent{}.  Then, during the first round, all
participants, except $P_1$, round their credits or debts to multiples
of $B$. But then, by the invariant, after the first round, the debt or
credit of $P_1$ must also be a multiple of $B$.  Furthermore, any
debtor will thus either be at zero after the first round or at a debt
of exactly $B$ \cent{}. After the second round any debtor will then be
either at zero or at a credit of exactly $B$ \cent{}. Thus after the
second round only the piggy bank has a debt. Since the piggy bank
received exactly $n \cdot B$ \cent{}, exactly $n$ transactions of $B$ \cent{}
will make it zero and the invariant ensures that, after the third
round, all the creditors must be zero too. \hfill \qed\end{proof}

\begin{remark}
It is important to use a cryptocurrency such as Bitcoin, Monero or
ZCash in order to hide both the issuer and the receiver of each
transaction in the third round. This ensures that nobody can identify the users.

Note that when using Bitcoin, users can potentially be tracked if the
addresses are used for other transactions.  Using Monero or Zcash can
offer more privacy since the exchanged amount can also be anonymized.
Moreover, to avoid leaking the fact that some persons need to
withdraw $B$\cent{} multiple times, and are thus doing multiple
transaction at the same time, all the withdrawals should be
synchronized. If exact synchronization is difficult to achieve, one
can decide on a common time interval, e.g., an hour, and all the
transactions have to be done at random time points during this
interval, independently, whether they are executed from the same or a
different participant.
\end{remark}

\begin{example}\label{ex:algo}
We now have a look at the algorithm for our example with \Alice, \Bob,
\Carole and \Dan. As in Example~\ref{ex:group}, the initial balance
vector is $[5,48,-73,20]$\footnote{In our running example the values are integral numbers in euros, so for simplicity we continue in euros instead of
cents as in the protocol description.}.
They decide on an upper bound of
$B=50$ \euro{} (note that to provably \emph{ensure} exactly $3\cdot n=12$ transactions
they should take an upper bound larger than any expense, that is
larger than $213$ \euro, but $50$ is sufficient for our example here).
For the first round, \Alice randomly selects $1 \leq{} t_1=12 \leq{} 50$ and makes
a first private transaction of $t_1=12$ \euro{} to \Bob.  \Bob then
makes a private transaction of $t_2=12+48 \mod 50=10$ \euro{} to
\Carole; \Carole makes a private transaction of $t_3=10-73 \mod 50 =
37$ \euro{} to \Dan; who makes a private transaction of $t_4=37+20 \mod
50 = 7$ \euro{} to \Alice. All these transactions are represented in
Figure~\ref{fig:1round}. The balance vector is thus now
$[0,50,-100,50]$, because for instance \Bob had a balance of
$48$ \euro{}, received $12$ \euro{} from \Alice and sends
$10$ \euro{} to \Carole, hence his new balance is $48+12-10=50$~\euro{}.
Everybody sends $50$~\euro{} to the piggy bank address, so that the
balance vector becomes $[-50,0,-150,0]$.  Finally there are four
$50$~\euro{} transactions, one to an address controlled by \Alice and
three to (different) addresses controlled by \Carole. These two last
rounds are illustrated in Figure~\ref{fig-2round}.
Note that we have exactly $n=4$ transactions per round.
\end{example}

\begin{figure}[htb]
\begin{center}
  \begin{tikzpicture}
    \node (A) at (0, 3) {\A: 5};
    \node (B) at (3, 3) {\B: 48};
    \node (C) at (3, 0) {\C: -73};
    \node (D) at (0, 0) {\X: 20};
    \draw[thick,-latex] (A) .. controls (1,4) and (2,4) .. (B);
    \draw[thick,-latex] (C) .. controls (2,-1) and (1,-1) .. (D);
    \draw[thick,-latex] (D) .. controls (-1,1) and (-1,2) .. (A);
    \draw[thick,-latex] (B) .. controls (4,2) and (4,1) .. (C);
    
    \node (R) at (-0.9,3.3) {$t_1 = 12$};
    \node (AB) at (1.6,4) {$12$ \euro{}};
    \node (BC) at (5.5,1.5) {$12+48=10 \mod 50$};
    \node (CD) at (1.5,-1) {$10-73=37 \mod 50$};
    \node (DA) at (-2.5,1.5) {$37+20=7 \mod 50$};
    
  \end{tikzpicture}
\end{center}
	\caption{First round (private transactions) of Example~\ref{ex:algo} (starting at $t_1=12$,
  ending with $5-12+7=0\mod 50$).}\label{fig:1round}
\end{figure}
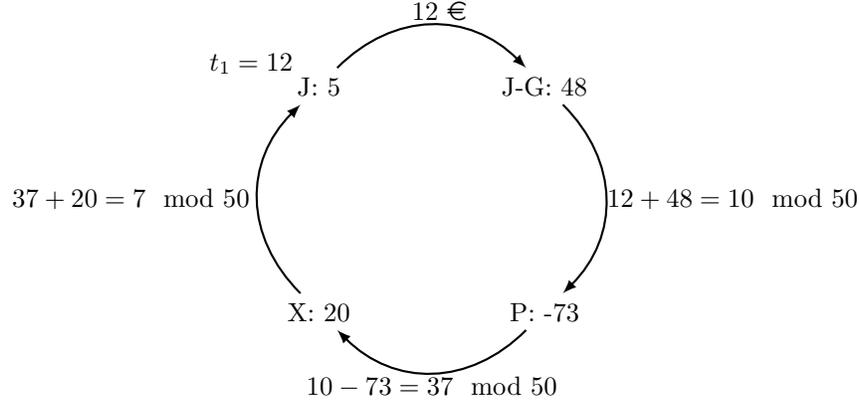

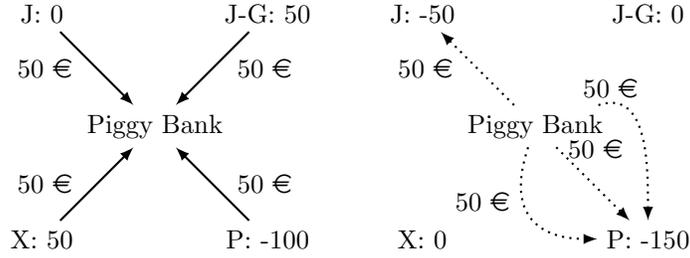
\begin{figure}[htb]
\begin{center}
  \begin{tikzpicture}
    \node (A) at (0, 3) {\A: 0};
    \node (B) at (3, 3) {\B: 50};
    \node (C) at (3, 0) {\C: -100};
    \node (D) at (0, 0) {\X: 50};
    \node (BA) at (1.5,1.5) {Piggy Bank};
    
    \draw[->,thick,-latex] (A) -> (BA) node[midway,left=2mm] {50 \euro};
    \draw[->,thick,-latex] (B) -> (BA) node[midway,right=2mm] {50 \euro};
    \draw[->,thick,-latex] (C) -> (BA) node[midway,right=2mm] {50 \euro};
    \draw[->,thick,-latex] (D) -> (BA) node[midway,left=2mm] {50 \euro};
    
  \end{tikzpicture}
  ~~~~~
  \begin{tikzpicture}
    \node (A) at (0, 3) {\A: -50};
    \node (B) at (3, 3) {\B: 0};
    \node (C) at (3, 0) {\C: -150};
    \node (D) at (0, 0) {\X: 0};
    \node (BA) at (1.5,1.5) {Piggy Bank};
    
    \node (L1) at (2.5,2) {50 \euro};
    \node (L2) at (0.8,0.5) {50 \euro};
    \node (L3) at (2.3,1.2) {50 \euro};
     
    \draw[->,thick,dotted,-latex] (BA) -> (A) node[midway,left=2mm] {50 \euro};
    \draw[->,thick,dotted,-latex] (BA) .. controls (3,2) and (3,1) .. (C);
    \draw[->,thick,dotted,-latex] (BA) to (C);
    \draw[->,thick,dotted,-latex] (BA) .. controls (1,0) and (2,0) .. (C);
    
  \end{tikzpicture}
\end{center}
\caption{On the left: second round of Example~\ref{ex:algo}. On the right: third round of Example~\ref{ex:algo}. Dotted arrows represent anonymous transactions, in particular \Carole uses three different anonymous addresses.}\label{fig-2round}
\end{figure}

\subsection{Security Proof} 
\newcommand{\D}{\mathcal{D}}
\newcommand{\I}{\mathcal{I}}

We now provide a formal security proof for our protocol.   
We use the standard multi-party computations definition of security
against semi-honest adversaries~\cite{Ma2008}. 
As stated above, we consider \emph{semi-honest} adversaries in the sense that the entities run \emph{honestly} the protocols, but they try to exploit all
intermediate information that they have received during the protocol.

We start by formally defining the \emph{indistinguishability} and the \emph{view} of an entity.
\begin{definition}[Indistinguishability]
  Let $\eta$ be a security parameter and $X_\eta$ and $Y_\eta$ two
  distributions. We say that $X_\eta$ and $Y_\eta$ are
  \emph{indistinguishable}, denoted $X_\eta \equiv
  Y_\eta$, if for every probabilistic  distinguisher
  $\D$ we have:
  $$  \Pr[x \leftarrow X_\eta : 1 \leftarrow \D(x)] - \Pr[y
    \leftarrow Y_\eta:1 \leftarrow \D(y)] = 0 
  $$ %
\end{definition}
\begin{definition}[view of a party]
  Let $\pi(I)$ be an $n$-parties protocol for the entities $(P_i)_{1\leq i \leq n}$
  using inputs $I=(I_i)_{1\leq i \leq n}$.  The view of a party $P_i(I_i)$
  (where $1\leq i \leq n$) during an execution of $\pi$, denoted
  $\textsc{view}_{\mathsf{\pi}(I)}(P_i(I_i))$, is the set of all values sent
  and received by $P_i$ during the protocol.  
\end{definition}
To prove that a party $P$ learns nothing during execution of the
protocol, we show that $P$ can run a \emph{simulator} algorithm that
simulates the protocol, such that $P$ (or any polynomially bounded
algorithm) is not able to differentiate an execution of the simulator
and an execution of the real protocol.  The idea is the following:
since the entity $P$ is able to generate his view using the simulator
without the secret inputs of other entities, $P$ cannot extract any
information from his view during the protocol.  This notion is
formalized in Definition~\ref{security}.
\begin{definition}[security with respect to semi-honest behavior]
  \label{security}
   Let $\pi(I)$ be an $n$-parties protocol between the entites $(P_i)_{1\leq i \leq n}$
   using inputs $I=(I_i)_{1\leq i \leq n}$.  We say
   that $\pi$ is \emph{secure in the presence of
     semi-honest adversaries} if for each $P_i$ (where $1\leq i \leq
   n$) there exists a protocol $\mathsf{Sim}_i(I_i)$ where $P_i$  interacts with a polynomial time algorithm $S_{i}(I_i)$
   such that:
  $$ \textsc{view}_{\mathsf{Sim}_i(I_i)} (P_i(I_i)) \equiv
  \textsc{view}_{\mathsf{\pi}(I)} (P_i(I_i)) 
  $$
\end{definition}
\begin{theorem}\label{th:sec}
Our conspiracy Santa protocol is secure with respect to semi-honest behavior.
\end{theorem}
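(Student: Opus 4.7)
The plan is to exhibit, for every party $P_i$, a simulator $\mathsf{Sim}_i(I_i)$ whose output is indistinguishable from $P_i$'s real view, as required by Definition~\ref{security}. The input $I_i$ consists of the prices of the gifts $P_i$ participates in, so the setup of Protocol~\ref{alg:setup} is trivially simulatable: it only reveals data already contained in $I_i$ (in particular the balance $p_i$). The two substantive pieces are therefore the first round of private transactions (Protocol~\ref{alg:firstround}) and the two cryptocurrency rounds (Protocol~\ref{alg:secondround}).

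For Protocol~\ref{alg:firstround}, the only new information appearing in $P_i$'s view is the single value received from the previous party. First I would show by induction on $i$ that, conditioned on the inputs, the value $t_{i-1}$ received by $P_i$ (for $i \geq 2$) is uniformly distributed on $\{1,\ldots,B\}$: indeed $t_1$ is uniform by construction, and the map $x \mapsto ((p_i + x) \bmod B)$ together with the ``$0$ becomes $B$'' convention is a bijection of $\{1,\ldots,B\}$, hence preserves uniformity. Consequently, for $i \geq 2$, the simulator $\mathsf{Sim}_i$ simply draws $t_{i-1}$ uniformly from $\{1,\ldots,B\}$, yielding an identically distributed view. The party $P_1$ requires a different treatment: by the invariant that balances sum to zero and the fact (from the proof of Theorem~\ref{thm:threen}) that every other party ends round~1 at a multiple of $B$, the value $t_n$ is forced to be the unique element of $\{1,\ldots,B\}$ congruent to $t_1 - p_1$ modulo $B$. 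So $\mathsf{Sim}_1$ computes $t_n$ deterministically from the $p_1$ and $t_1$ it already knows, again reproducing the real distribution exactly.

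For Protocol~\ref{alg:secondround}, every transaction carries the fixed amount $B$, so the only information in $P_i$'s view is the set of addresses involved. In round~2 all $n$ deposits to the piggy bank have the same amount and are made by publicly known participants, hence carry no additional information beyond what is already in $I_i$. In round~3 there are exactly $n$ anonymous withdrawals of $B$ each (by Theorem~\ref{thm:threen}), of which $P_i$ knows that $k_i$ are to addresses under their control. The simulator produces $n - k_i$ fresh anonymous recipient addresses, each receiving $B$ from the piggy bank. Indistinguishability here reduces to the untraceability of the underlying cryptocurrency: any distinguisher between simulated and real sets of withdrawal addresses would yield an attack against the anonymity of Bitcoin, Monero or ZCash.

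Composing the three simulators and invoking the independence of the randomness used across rounds (the initial $t_1$ and the freshly sampled cryptocurrency addresses) then gives indistinguishability of the joint view. The main obstacle is the round-1 argument, in particular verifying the uniformity step under the ``$0$ becomes $B$'' convention and handling the asymmetric case of $P_1$, whose received value is deterministic rather than random. The cryptocurrency part, by contrast, is essentially a reduction to the anonymity primitive that the chosen cryptocurrency is assumed to provide.
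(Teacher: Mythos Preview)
Your proposal is correct and follows essentially the same simulator construction as the paper: uniform $t_{i-1}$ for $i\geq 2$ (the paper phrases your bijection/induction argument as $t_1$ acting as a one-time pad), deterministic $t_n$ for $P_1$ computed from $(p_1,t_1)$, and the anonymous rounds handled by having the simulator supply the other $n-1$ deposits and the remaining withdrawals to fresh anonymous addresses. One small modeling difference worth noting: the paper's indistinguishability definition demands a probability gap of exactly zero and its proof accordingly treats anonymous cryptocurrency addresses as an ideal primitive (so the views are claimed to be \emph{identical}), whereas you frame the last step as a computational reduction to the cryptocurrency's anonymity---both are defensible, but yours does not literally match the paper's perfect-indistinguishability definition.
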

\begin{proof}
We denote our protocol by $\mathsf{SCS}_n(I)$ (for \emph{Secure
  Conspiracy Santa}). For all $1\leq i \leq n$, each entity $P_i$ has
the input $I_i=(n,B,p_i)$, where $I=(I_i)_{1\leq i \leq n}$, $p_i$ is
the balance of each of the $n$ participants.  For all $1\leq i \leq
n$, we show how to build the protocol $\mathsf{Sim}_{i}$ such that:
   $$ \textsc{view}_{\mathsf{Sim}_i(I_i)} (P_i(I_i)) \equiv
  \textsc{view}_{\mathsf{SCS}_n(I)} (P_i(I_i)) 
  $$
   $\mathsf{Sim}_{1}$ is given in Simulator~\ref{alg:sim}, and
   $\mathsf{Sim}_{i}$ for  $1<i\leq n$ is given in
   Simulator~\ref{alg:simi}.

\makeatletter
\renewcommand{\ALG@name}{Simulator}
\makeatother
\begin{algorithm}[htb]
\caption{Algorithm $S_1$ of the protocol $\mathsf{Sim}_1(I_1)$.}\label{alg:sim}
\begin{algorithmic}[1]
  \Require $S_1$ knows $I_1=(n,B,p_1)$
  \State $S_1$ receives $t_1$ \cent{} from $P_1$; \label{si:tfromp}
  \If{$0\leq (p_1-t_1)$} \label{si:ttop}
  \State{$S_1$ sends $(B-(p_1-t_1))$ \cent{} to $P_1$;}
  \ElsIf{$(p_1-t_1) <0$}
  \State{$S_1$ sends $(B-((t_1-p_1)\mod B))$ \cent{} to $P_1$;}
  \EndIf
  
  \For{$j=1$ \To $n-1$}
        \State $S_1$ sends $B$ \cent{} to the shared anonymous address; \label{si:put}
  \EndFor
  
  \If{$0\leq (p_1-t_1)$}
  \State{$x=n$;}
  \ElsIf{$(p_1-t_1) <0$}
  \State{$x=n+\frac{(p_1-t_1)-((t_1-p_1)\mod B )}{B}$;}
  \EndIf
  
  \For{$j=1$ \To $x$}
        		\State $S_1$ makes the shared anonymous address
                        pay $B$ \cent{} to an anonymous
                        address; \label{si:take}
  \EndFor
\end{algorithmic}
\end{algorithm}

\begin{algorithm}[htb]
\caption{Algorithm $S_i$ of the protocol $\mathsf{Sim}_i(I_i)$, where $1<i\leq n$.}\label{alg:simi}
\begin{algorithmic}[1]
  \Require $S_i$ knows $I_1=(n,B,p_i)$ 
  \State $t_{i-1} \stackrel{\$}{\longleftarrow}\left[1..B\right]$ ; \label{randt}
  \State{$S_i$ sends $t_{i-1}$ \cent{} to $P_i$;}
  \State{$S_i$ receives $t_{i}$ \cent{} from $P_i$;}\label{si:tfrompnn}
  \For{$j=1$ \To $n-1$}
        \State $S_i$ sends $B$ \cent{} to the shared anonymous address;  \label{si:puti}
  \EndFor
  \State{$x=n+\frac{p_i+t_{i-1}-t_i -B}{B}$;}
  \For{$j=1$ \To $x$}
        		\State $S_i$ makes the shared anonymous address
                        pay $B$ \cent{} to an anonymous
                        address;  \label{si:takei}
  \EndFor
\end{algorithmic}
\end{algorithm}

   We first show that the view of $P_1$ in the real protocol $\mathsf{SCS}_n$ is the same as in the protocol $\mathsf{Sim}_1$:
   \begin{itemize}
        \item At Instruction~\ref{si:tfromp} of Simulator~\ref{alg:sim}, $S_1$ receives $t_1$ \cent{} from $P_1$ such that $1\leq t_1 \leq B$, as at Instruction~\ref{i:tfromp} of Protocol~\ref{alg:firstround}.
        \item At Instruction~\ref{i:ttop} of Protocol~\ref{alg:firstround}, $P_n$ sends  $t_n$ \cent{} to $P_1$ such that:
        \begin{itemize}
                \item[$\bullet$] $1\leq t_n \leq B$
                \item[$\bullet$] The balance of $P_1$ is a multiple of $B$.
        \end{itemize}
        We show that these two conditions hold in the simulator.
        At Instruction~\ref{si:ttop} of Protocol~\ref{alg:firstround}, the balance of $P_1$ is $(p_1-t_1)$. 
        \begin{enumerate}
            \item If the balance is positive, then $0 \leq (p_1-t_1) < B$ and $S_1$ sends $B-(p_1-t_1)$ \cent{} to $P_1$. We then have:
            \begin{itemize}
                \item[$\bullet$] $1 \leq B- (p_1-t_1) \leq B$
                \item[$\bullet$] The balance of $P_1$ is $B- (p_1-t_1) + (p_1-t_1) =B$ which is multiple of $B$.
            \end{itemize}
            \item If the balance is negative, then  $S_1$ sends $(B-((t_1-p_1)\mod B))$ \cent{} to $P_1$. We then have:
            \begin{itemize}
                \item[$\bullet$] $1\leq B-((t_1-p_1)\mod B) \leq B$
                \item[$\bullet$] The balance of $P_1$ is:
                $B-((t_1-p_1)\mod B) + (p_1-t_1) = B + \left\lfloor \frac{p_1-t_1}{B} \right\rfloor \cdot B =   \left( \left\lfloor \frac{p_1-t_1}{B}  \right\rfloor +1\right) \cdot B $,
                which is a multiple of $B$.
            \end{itemize}
        \end{enumerate}
        \item At Instruction~\ref{si:put} of Simulator~\ref{alg:sim}, $S_1$ sends $B$ \cent{} to the shared anonymous address $(n-1)$ times, and $P_1$ sends $B$ \cent{} to the shared anonymous address $1$ time, so together they send $B$ \cent{} $n$ times to the shared anonymous address, as at Instruction~\ref{i:put} of Protocol~\ref{alg:secondround}.
        \item At Instruction~\ref{i:take} of Protocol~\ref{alg:secondround}, the users  make the shared anonymous address
                        pay $B$ \cent{} to $n$ anonymous
                        addresses. 
        At Instruction~\ref{si:take} of Simulator~\ref{alg:sim}, the balance of $P_1$ is:
        \begin{itemize}
            \item $0$ if $0\leq (p_1-t_1)$ (because $P_1$ had $B$ \cent{} and sent $B$ \cent{} to the shared address).
            \item Otherwise, the balance of $P_1$ is $B-((t_1-p_1)\mod B) + (p_1-t_1)-B = ((t_1-p_1)\mod B) + (p_1-t_1)$. Hence $P_1$ receives $B$ \cent{} from the shared anonymous address $\left|\frac{((t_1-p_1)\mod B) + (p_1-t_1)}{B}\right|$ times, and $S_1$  receives $B$ \cent{} from the shared anonymous address $n+\frac{((t_1-p_1)\mod B) + (p_1-t_1)}{B}$
            times. 
            We note that $((t_1-p_1)\mod B) + (p_1-t_1) \leq 0$ because $(p_1-t_1) \leq 0$ and $((t_1-p_1)\mod B) \leq - (p_1-t_1)$.
            Finally, $P_1$ and $S_1$  make the shared anonymous address
                        pay $B$ \cent{} to $n$ anonymous
                        addresses because: 
                        \begin{small}
                          $$ n+\frac{(t_1-p_1)\mod B + (p_1-t_1)}{B} +\left|\frac{(t_1-p_1)\mod B + (p_1-t_1)}{B}\right| = n$$
                        \end{small}

        \end{itemize}
        \end{itemize}
        Finally, we deduce that the  view of $P_1$ in the real protocol $\mathsf{SCS}_n$ is the the same as in the simulator $\mathsf{Sim}_1$: 
         $$ \textsc{view}_{\mathsf{Sim}_1(I_1)} (P_1(I_1)) \equiv
  \textsc{view}_{\mathsf{SCS}_n(I)} (P_1(I_1)) 
  $$

   We then show that the view of $P_i$ in the real protocol $\mathsf{SCS}_n$ is the same as in the protocol $\mathsf{Sim}_i$ for any $1 \leq i \leq n$:
   \begin{itemize}
   
        \item At instruction~\ref{i:tfromp} and~\ref{i:tfrompn} of Protocol~\ref{alg:firstround}, each user $P_i$ receives $t_{i-1}$ \cent{} from $P_{i-1}$ for any $1 \leq i \leq n$ such that $1\leq t_{i-1}\leq B$. We note that each $t_{i-1}$ depends on the value $t_1$ chosen by $P_1$. Moreover, $t_1$ comes form a uniform distribution and acts as a one-time pad on the values $t_{i-1}$, i.e., it \emph{randomizes} $t_{i-1}$ such that $P_i$ cannot distinguish whether $t_{i-1}$ was correctly generated or comes from the uniform distribution on $\{ 1,\ldots , B\}$.  At instruction~\ref{randt} of Simulator~\ref{alg:simi}, $S_i$ chooses $t_{i-1}$ at random in the uniform distribution on $\{ 1,\ldots , B\}$ and sends $t_{i-1}$ to $P_i$.
    
        \item At Instruction~\ref{si:tfrompnn} of Simulator~\ref{alg:simi}, $S_i$ receives $t_i$ \cent{} from $P_i$ such that $1\leq t_1 \leq B$, like at Instruction~\ref{i:tfrompn} of Protocol~\ref{alg:firstround}.
         \item At Instruction~\ref{si:puti} of Simulator~\ref{alg:simi}, $S_i$ sends $B$ \cent{} to the shared anonymous address $(n-1)$ times, and $P_i$ sends $B$ \cent{} to the shared anonymous address $1$ time, so together they send $B$ \cent{} $n$ times to the shared anonymous address, as at Instruction~\ref{i:put} of Protocol~\ref{alg:secondround}.
         \item At Instruction~\ref{i:take} of Protocol~\ref{alg:secondround}, the users  make the shared anonymous address
                         pay $B$ \cent{} to $n$ anonymous
                         addresses. 
         At Instruction~\ref{si:takei} of Simulator~\ref{alg:simi}, the balance of $P_i$ 
         is $p_i+t_{i-1} - t_i -B $. Hence $P_i$ receives $B$ \cent{} from the shared anonymous address $\left| \frac{p_i+t_{i-1} - t_i -B }{B} \right|$ times, and $S_i$  receives $B$ \cent{} from the shared anonymous address $n+\frac{p_i+t_{i-1} - t_i -B }{B}$
             times. We note that $p_i+t_{i-1} - t_i -B \leq 0$; indeed, we have $t_i=(p_i+t_{i-1}) \mod B$ (Instruction~\ref{i:piti} of Protocol~\ref{alg:firstround}). Since $p_i\leq B$ and ${t_{i-1}\leq B}$, then we have $(p_i+t_{i-1}) - t_i \leq B$, so we have $p_i+t_{i-1} - t_i -B \leq 0$.  Finally, $P_i$ and $S_i$  make the shared anonymous address
                         pay $B$ \cent{} to $n$ anonymous
                         addresses because: 
                         $$ n+\frac{p_i+t_{i-1} - t_i -B }{B} +\left|\frac{p_i+t_{i-1} - t_i -B }{B}\right| = n$$
        \end{itemize}
        Finally, to conclude the proof, we deduce that for all $1\leq i \leq n$ the  view of $P_i$ in the real protocol $\mathsf{SCS}_n$ is the the same as in the simulator $\mathsf{Sim}_i$: 
         $$ \textsc{view}_{\mathsf{Sim}_i(I_i)} (P_i(I_i)) \equiv
  \textsc{view}_{\mathsf{SCS}_n(I)} (P_i(I_i)).\vspace{-15pt}$$
\hfill  \qed\end{proof}

\subsection{Physical Variant}\label{ssec:physical}

If one does not wish to use cryptocurrencies, one can use the following
physical variant of the protocol.
In the first round each participant needs to transfer some money to
another participant using a private channel. A simple physical
solution is that they meet and perform the transfer face to face, while ensuring that nobody spies on them. %
For the second round, the balance of all participants is a multiple of
$B$ \cent.  During the first part of this algorithm, everyone puts an
envelope containing $B$ \cent{} onto a stack that is in a secure room.
By \emph{secure room}, we mean a place where no other participants can spy
what is going on inside.
In the second part all participants enter this secure room one after the other
and do the following according to their balance:
\begin{itemize}
\item If the balance is $0$ then the participant does nothing. 
\item If the balance is a multiple $k$ of $B$ \cent, the participant
  takes $k$ envelopes from the top of the stack, opens them and
  collects the corresponding $k \cdot B$ \cent{}.
  Then he places, in each of the now empty $k$ envelopes,
  a piece of paper that have the same shape and weight as a the
  $B$ \cent{}. These envelopes are placed under the stack of
  envelopes.
\end{itemize}
This method allows everyone to collect his money without revealing to
the other ones how much they have taken.

We show that this protocol is secure with respect to semi-honest
behavior.
For this, we physically simulate the protocol for any participant.
We first note that the first round of the protocol is the same as Protocol~\ref{alg:firstround}, so this round can be simulated exactly as in the proof of Theorem~\ref{th:sec}.
We simulate the second round for any participant as follows. During
the first part of the algorithm, the simulator enters $n-1$ times
the secure room and puts an
envelope containing $B$ \cent{} onto the stack. When it is his turn, the participant enters the room and puts an
envelope containing $B$ \cent{} onto the stack. Finally, there are $n$ envelopes containing $B$ \cent{} on a stack.
In the second part the simulator enters the room $n-1$ times and does
nothing. When it is his turn, the participant enters the room and
takes $k$ envelopes from the top of the stack, opens them and collects
the corresponding $k \cdot B$ \cent{} as in the real protocol, where $0\leq k\leq n$. Since each of the $n$ envelopes contains $B$ \cent{}, the simulation works for any $0\leq k\leq n$. 
  
We deduce that the view of the participant during the simulation is
the same as during the real protocol, which implies that our physical
protocol is secure with respect to semi-honest behavior.

\begin{remark}
This physical protocol mimics exactly the solution using
cryptocurrencies. One advantage, though, of the physical world is that it is
easier to perform transactions with $0$ \cent{}.
Therefore there exists a simpler solution for the second
round, where creditors do not have to give $B$ \cent{} in advance:
if the participant is in debt he puts an envelope containing $B$
\cent{} onto the stack, otherwise he puts an envelope containing a
piece of paper under the stack.

The first and third rounds are not modified, and the simulator for the
security proof is not modified either.
\end{remark}

\section{A Faster Protocol}\label{sec:faster}
To the price of having much larger transactions, there exist a
protocol for conspiracy Santa that requires less transactions: $2
\cdot n+1$ instead of $3 \cdot n$.

\subsection{Merging the first two rounds}

We now propose a variant of the cryptocurrency protocol where the
initialization phase and the third round are unchanged but the first
and second round are merged.

The idea is that the participants will round their debts or credits so
that the different amounts become indistinguishable, and at the same
time they will give $B$~\cent{} to the next player: this is what they
would have given to the Piggy bank in the second round. 
At the end the first player will receive the amount needed to round
his debt or credit as previously, {\em plus} $n-1$ times $B$~\cent{}. 
It is then sufficient for him to give $n \cdot B$~\cent{} to the Piggy bank,
for the third round to take place as previously. 
The details are described
in Protocols~\ref{alg:merged} and~\ref{alg:onlythird}.

\begin{algorithm}[htb]
\caption{Secure rounding with an extra multiple}\label{alg:merged}
\begin{algorithmic}[1]
  \Require An upper bound $B$ on the value of any gift;
  \Require Each one of $n$ participants knows his integer balance $p_i$;
  \Require $\sum_{i=1}^n p_i=0$.
  \Ensure Each one of $n$ participants has a new balance $p_i$, either
  $0$, $B$ or a negative multiple of $B$;
  \Ensure  $\sum_{i=1}^n p_i=0$;
  \Ensure The credit of the piggy bank is $nB$~\cent{};
  \Ensure The protocol is zero-knowledge.
  \State{$P_1$}: $t_1\stackrel{\$}{\longleftarrow}\left[0..B-1\right]$ \label{fPoneBal}
  uniformly sampled at random;
  \State{$P_1$}: $p_1 = p_1 - 1-t_1$;
  \State $P_1$ sends $1+t_1$~\cent{} to $P_2$; \hfill \Comment{Random
    transaction $1..B$ on a secure channel} \label{fPoneSend}
  \State{$P_2$}: $p_2 = p_2 + 1 + t_1$;
  \For{$i=2$ \To $n-1$}
  	\State{$P_i$}: $t_i = p_i - 1\mod B$;\hfill\Comment{$0 \leq t_i
          \leq B-1$} 
	\State{$P_i$}: $p_i = p_i - 1 - t_i - (i-1) \cdot B$;
	\State $P_i$ sends $(1+t_i+(i-1) \cdot B)$~\cent{} to $P_{i+1}$; \newline
      ~\hfill \phantom{d} \Comment{Transaction $(i-1)\cdot B+1..i\cdot B$ on a secure channel} \label{fPiSend}
	\State{$P_{i+1}$}: $p_{i+1} = p_{i+1} + 1+t_i + (i-1) \cdot B$;
  \EndFor
  \State{$P_n$}: $t_n = p_n - 1\mod B $;\hfill\Comment{$0 \leq t_n \leq B-1$}
  \State{$P_n$}: $p_n = p_n - 1 - t_n - (n-1) \cdot B$;
  \State $P_n$ sends $(1+t_n + (n-1) \cdot B)$~\cent{} to $P_1$; \hfill
  \Comment{Transaction $(n-1)\cdot B+1..n\cdot B$ on a secure channel} \label{fPnSend}
  \State $P_1$ sends $n \cdot B$~\cent{} to the piggy
  bank;\hfill\Comment{Public transaction of $nB$} \label{fnBtoPB}
  \State{$P_1$}: $p_1 = p_1 + 1 + t_n-B$;
\end{algorithmic}
\end{algorithm}

\begin{algorithm}[htb]
\caption{Secure recovering from the piggy bank}\label{alg:onlythird}
\begin{algorithmic}[1]
  \Require An upper bound $B$ on the value of any gift;
  \Require $n$ participants each with a balance $p_i$, either
  $0$, or a negative multiple of $B$, their sum being $-nB$;
  \Require The piggy bank has a credit of $nB$~\cent{}.
  \Ensure All balances are zero;
  \Ensure The protocol is zero-knowledge.
  \ParFor{$i=1$ \To $n$} \label{frec}
	\If{$p_i<0$}\Comment{Creditors recover their assets}
  		\ParFor{$j=1$ \To $\frac{-p_i}{B}$}\label{frecp}
        		\State $P_i$ makes the shared anonymous address
                        pay $B$\cent{} to one of his own anonymous
                        addresses; 
                        		\hfill \Comment{Public transaction of $B$}
  		\EndParFor
        	\State{$P_i$}: $p_i = 0$.
        \EndIf
  \EndParFor
\end{algorithmic}
\end{algorithm}

\begin{theorem} For $n$ participants, Protocols~\ref{alg:setup}, \ref{alg:merged},
  \ref{alg:onlythird} are correct and, apart from the setup, require
  $2 \cdot n+1$ transactions.
\end{theorem}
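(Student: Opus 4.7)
The plan is to mirror the structure of the proof of Theorem~\ref{thm:threen}, with two new ingredients to track: the chain transaction now carries an accumulating multiple of $B$ alongside the randomizing residue $1+t_i$, and a single bulk transfer of $nB$~\cent{} from $P_1$ to the piggy bank occurs at the end of Protocol~\ref{alg:merged}.

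The first step is a direct count from the pseudocode. Protocol~\ref{alg:merged} performs exactly $n$ private transactions along the cycle $P_1\to P_2 \to \cdots \to P_n \to P_1$ (lines~\ref{fPoneSend}, \ref{fPiSend}, and \ref{fPnSend}), plus one public transaction from $P_1$ to the piggy bank (line~\ref{fnBtoPB}), giving $n+1$ transactions. Protocol~\ref{alg:onlythird} then performs one public transaction per $B$~\cent{} owed to a creditor, and the correctness analysis below shows that exactly $n$ such transactions suffice. The total is $(n+1) + n = 2n+1$.

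For correctness, the key invariant is that $\sum_i p_i + (\text{piggy bank credit})$ is preserved by every transaction, since each transaction is between participants or involves the piggy bank. This quantity is $0$ initially, so it remains $0$ throughout. For each $i\geq 2$, a direct computation of incoming and outgoing amounts yields
\[
p_i^{\text{end}} = p_i^{\text{start}} + t_{i-1} - t_i - B,
\]
and the same formula holds for $i=1$ with the convention $t_0 := t_n$, after accounting for the $nB$~\cent{} sent to the piggy bank. Because the algorithm sets $t_i = (p_i - 1)\bmod B$ on the balance $p_i$ \emph{after} the incoming payment has been received, the quantity $p_i^{\text{start}} + t_{i-1} - t_i$ is a multiple of $B$, so each $p_i^{\text{end}}$ is a multiple of $B$. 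Non-positivity of $p_i^{\text{end}}$ then follows from the same bound on the initial balances used in Theorem~\ref{thm:threen} together with $t_i, t_{i-1}\in [0,B-1]$. Combining the invariant with the $nB$~\cent{} now credited to the piggy bank yields $\sum_i p_i^{\text{end}} = -nB$.

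The final step is then routine: since each $p_i^{\text{end}}$ is a non-positive multiple of $B$ and their sum is $-nB$, Protocol~\ref{alg:onlythird} performs $-p_i^{\text{end}}/B$ withdrawals per creditor, for exactly $n$ transactions in total, after which all balances and the piggy bank are zero. The main obstacle is the bookkeeping for the accumulating $(i-1)B$ offsets along the chain: one must verify that this offset, added to each transaction but irrelevant to the residue computation, neither disturbs the mod-$B$ rounding at $P_2,\ldots,P_n$ nor leaves any residual mass anywhere except at $P_1$, so that a single payment of $nB$~\cent{} is exactly the right amount to discharge to the piggy bank.
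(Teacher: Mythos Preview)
Your argument is correct and follows the same strategy as the paper's own proof: the zero-sum invariant, the fact that each $p_i^{\text{end}}$ is a multiple of $B$, and the bound $p_i^{\text{start}}<B$ together force all post-round balances to be non-positive multiples of $B$ summing to $-nB$, whence exactly $n$ withdrawals suffice. The only wrinkle is that your justification ``the algorithm sets $t_i=(p_i-1)\bmod B$'' does not apply to $t_1$, which is chosen uniformly at random; for $P_1$ the multiple-of-$B$ property instead follows from the invariant you already stated (the piggy bank holds $nB$ and the remaining $p_j^{\text{end}}$ are multiples of $B$), which is precisely how the paper handles that case.
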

\begin{proof}
Including the piggy bank, all the transactions are among participants,
therefore the sum of all the debts and credits is invariant and
zero. There remains to prove that in the end of the protocol all the
debts and credits are also zero.  Any initial debt for any gift is
strictly lower than $B/(n-1)$ and the largest debt is thus strictly
lower than $B$~\cent{}, that is all integral debts are between $0$ and
$B-1$.  Then, during the first round, all participants, except $P_1$,
still round their credits or debts to multiples of~$B$: for $i\geq 2$,
$p_i = (p_i-1)-(p_i-1 \mod B)-(i \cdot 1) \cdot B \equiv 0\mod B$.
Also, every participant sends more than he received:
$1+t_i+(i-1)\cdot B<1+t_{i+1}+i \cdot B$, so any creditor remains creditor.
Furthermore, any debtor will thus either be at zero after the first
round or at a credit of exactly $B$~\cent{}: any debtor $(i+1) \neq 1$
has an initial debt $1 \leq p_{i+1} \leq B$; then he receives
$1+t_{i}+(i-1)\cdot B$ so that his new debt is between $2+(i-1) \cdot B$ and
$(i+1)\cdot B$; finally he sends the residue mod $B$ plus $i \cdot B$, so
that his credit must be either $B$ or $0$.  All the money sent is
immediately added to the debt of the receiver and removed from that of
the sender (including an aggregation for $P_1$ at the end,
$(n-1)\cdot B-n\cdot B=-B$) so, overall, including the piggy bank, the sum of all
debts and credits remains zero.  We just showed that except $P_1$ all
the credits or debts are now multiples of $B$ so that of $P_1$ must
also be a multiple $B$.  Further, if $P_1$ is a debtor then $1 \leq
p_1 \leq B$. He sends $1+t_1$, $n\cdot B$ and receives
$1+t_n+(n-1)\cdot B$ so that his new balance is $2-2\cdot B \leq
p_1-t_1-n\cdot B+t_n+(n-1)\cdot B \leq B-1$.  As we have shown that
this balance must be a multiple of $B$, then either his balance is
zero or he has a credit of $B$.  Therefore, at the end of
Protocol~\ref{alg:merged}, only the piggy bank has a debt. Since the
piggy bank received exactly $n \cdot B$~\cent{}, exactly $n$
transactions of~$B$~\cent{} will make it zero and the invariant
ensures that, after Protocol~\ref{alg:onlythird}, all the creditors
must be zero too.  \qed\end{proof}

\begin{example}\label{ex:faster}
We now have a look at the algorithm for our example with \Alice, \Bob,
\Carole and \Dan. As in Example~\ref{ex:algo}, the initial balance
vector is $[5,48,-73,20]$ and the upper bound is $B=50$~\euro{}.  For
the first round, \Alice randomly selects $0 \leq{} t_1=11 < 50$ and
makes a first private transaction of $1+t_1=12$~\euro{} to \Bob.  \Bob
then makes a private transaction of $1+(12+48 - 1\mod
50)+50=60$~\euro{} to \Carole; \Carole makes a private transaction of
$1+(60-73-1 \mod 50)+2 \cdot 50 = 137$~\euro{} to \Dan; who makes a private
transaction of $1+(137+20-1 \mod 50)+3 \cdot 50 = 157$~\euro{} to \Alice and
the balance vector is $[150,0,-150,0]$ Finally \Alice gives
$200$~\euro{} to the piggy bank so that the player's balance vector
becomes $[-50,0,-150,0]$.  Indeed, for instance, \Bob had a balance of
$48$~\euro{}, received $12$~\euro{} from \Alice and sends $60$~\euro{}
to \Carole, hence his new balance is $48+12-60=0$~\euro{}.  All these
transactions are represented in Figure~\ref{fig:1roundmerged}.
Finally there are four $50$~\euro{} transactions, one to an address
controlled by \Alice and three to (different) addresses controlled by
\Carole. This last round is exactly the one illustrated in
Figure~\ref{fig-2round}, right.  Note that we have $n=4$ participants
and exactly $1+(n-2)+1+1=5$ transactions in Protocol~\ref{alg:merged}
and $4$ transactions in Protocol~\ref{alg:onlythird}, hence the total
is $9=5+4=2\cdot 4 + 1 = 2 \cdot  n + 1$.
\end{example}

\begin{figure}[htb]
\begin{center}
\resizebox{\textwidth}{!}{  \begin{tikzpicture}
    \node (A) at (0, 5) {\A: 5};
    \node (B) at (5, 5) {\B: 48};
    \node (C) at (5, 0) {\C: -73};
    \node (D) at (0, 0) {\X: 20};
    \node (BA) at (2.5,2.5) {Piggy Bank};

    \draw[thick,-latex] (A) .. controls (2,6) and (4,6) .. (B);
    \draw[thick,-latex] (C) .. controls (2,-1) and (1,-1) .. (D);
    \draw[thick,-latex] (D) .. controls (-1,2) and (-1,4) .. (A);
    \draw[thick,-latex] (B) .. controls (6,4) and (6,2) .. (C);
    \draw[->,thick,-latex] (A) -> (BA) node[midway,left=2mm] {200~\euro};
    
    \node (R) at (-1,5.5) {$t_1 =11$};
    \node (AB) at (2.5,6.25) {$1+11=12$~\euro{}};
    \node (BB) at (7.5,5.5) {$12+48-1=9 \mod 50$};
    \node (BC) at (7.5,2.5) {$1+9+50=60$~\euro{}};
    \node (CC) at (7.5,-0.5) {$60-73-1=36 \mod 50$};
    \node (CD) at (2.5,-1.5) {$1+36+100=137$~\euro{}};
    \node (DD) at (-2.5,-0.5) {$137+20-1=6 \mod 50$};
    \node (DA) at (-2.5,2.5) {$1+6+150=157$~\euro{}};
    
  \end{tikzpicture}
}\end{center}
\caption{Merged round of Example~\ref{ex:faster} (starting at $t_1=11$,
  ending with $5-12+157=0\mod 50$, and a final transaction of $200$~\euro{} to the Piggy bank).}\label{fig:1roundmerged}
\end{figure}
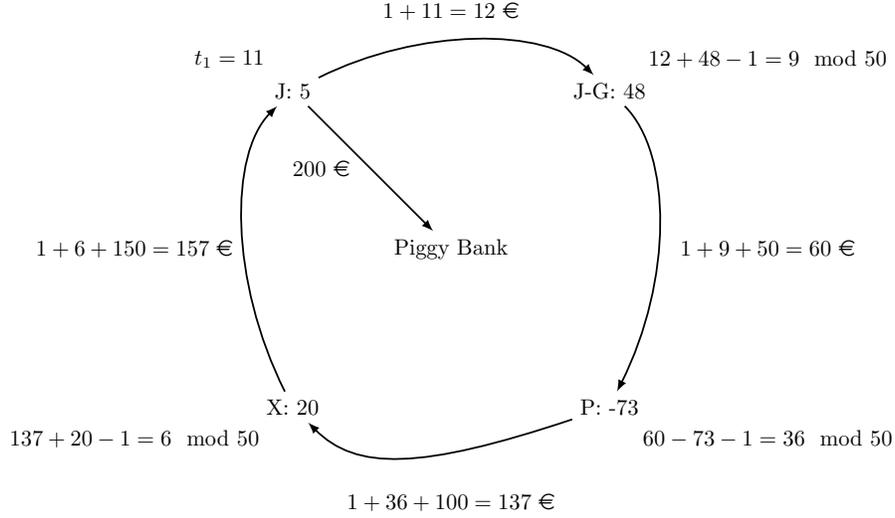

\subsection{Security Proof of the Faster Variant} 

\begin{theorem}\label{th:fsec}
Our faster conspiracy Santa protocol is secure with respect to semi-honest behavior.
\end{theorem}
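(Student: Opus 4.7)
The plan is to follow the same template as the proof of Theorem~\ref{th:sec}: for every participant $P_i$ I will exhibit a simulator $\mathsf{Sim}_i$ that, on input $(n,B,p_i)$, interacts with $P_i$ and reproduces exactly the view $P_i$ sees in a real execution of Protocols~\ref{alg:setup}, \ref{alg:merged} and~\ref{alg:onlythird}. The only new messages compared with Theorem~\ref{th:sec} are the enlarged transfers $1+t_i+(i-1)\cdot B$~\cent{} in the merged round and the single additional transaction of $n \cdot B$~\cent{} from $P_1$ to the piggy bank; the recovery round is unchanged, and consists of $n$ anonymous public withdrawals of $B$~\cent{} from the shared address.

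The core randomness argument carries over verbatim. By induction on the loop in Protocol~\ref{alg:merged} one checks that $t_i \equiv t_1 + p_2 + \cdots + p_i \pmod{B}$, so since $t_1$ is drawn uniformly from $[0,B-1]$ and $P_i$ (for $i>1$) has no information about the other participants' balances, the value $t_{i-1}$ is uniform on $[0,B-1]$ conditionally on $P_i$'s input. Hence $\mathsf{Sim}_i$ can simply sample $t_{i-1}$ uniformly and send $P_i$ the incoming amount $1+t_{i-1}+(i-2)\cdot B$~\cent{}; the outgoing amount is then whatever $P_i$ produces from the protocol's deterministic prescription.

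The case of $P_1$ is a little more delicate but ultimately simpler: setting $i=n$ in the formula above and using $\sum_j p_j = 0$ forces $t_n \equiv t_1 - p_1 \pmod{B}$, a value that $P_1$ can compute itself from $p_1$ and its own uniformly chosen $t_1$. Thus $\mathsf{Sim}_1$ deterministically returns $1+t_n+(n-1)\cdot B$~\cent{} to $P_1$, receives $P_1$'s $n \cdot B$~\cent{} transfer to the piggy bank, and then, for Protocol~\ref{alg:onlythird}, simulates exactly $n-k_1$ anonymous withdrawals of $B$~\cent{} from the shared address, where $k_1 \in \{0,1\}$ is the number of withdrawals $P_1$ performs itself (dictated by $P_1$'s final balance, either $0$ or $-B$). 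The analogous tail for $\mathsf{Sim}_i$ with $i>1$ uses $k_i = -(p_i+t_{i-1}-t_i-B)/B$, so that the simulator plays the remaining $n-k_i$ withdrawals.

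The main obstacle is the bookkeeping: one must verify that every $k_i$ is a non-negative integer and that the $k_i$ sum to $n$, which amounts to redoing the balance arithmetic from the correctness proof of Protocols~\ref{alg:merged}--\ref{alg:onlythird}. Once that is checked, indistinguishability follows exactly as in Theorem~\ref{th:sec}: every simulated transfer is either a fixed function of $P_i$'s input and the uniformly chosen $t_1$, or an anonymous public cryptocurrency transaction whose sender cannot be identified, so the simulated distribution coincides with the real one.
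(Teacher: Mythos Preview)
Your proposal is correct and follows essentially the same simulator-based approach as the paper: for $i>1$ you sample $t_{i-1}$ uniformly and play the incoming transfer $1+t_{i-1}+(i-2)B$, and for $P_1$ you exploit the identity $t_n\equiv t_1-p_1\pmod B$ (which the paper reaches via a case split on the sign of $p_1-t_1$, but it is the same computation). Two small points to tighten: first, for $i>1$ the simulator must also emit the single public $n\cdot B$ transfer to the shared address (in the real run this is done by $P_1$, but it is part of every participant's view); second, your claim that $k_1\in\{0,1\}$ only holds when $P_1$ is a debtor---if $P_1$ is a creditor his final balance can be a larger negative multiple of $B$, so $k_1$ may exceed~$1$. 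This does not affect the argument, since $k_1$ is in any case computable from $(p_1,t_1)$, but the stated range should be corrected.
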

\begin{proof}
We denote our protocol by $\mathsf{FSCF}_n(I)$ (for \emph{Fast Secure Conspiracy Santa}). For all $1\leq i \leq
   n$, each entity $P_i$ has the input $I_i=(n,B,p_i)$, where 
   $I=(I_i)_{1\leq i \leq n}$.
   For all $1\leq i \leq
   n$, we show how to build the protocol  $\mathsf{Sim}'_{i}$
   such that:
   $$ \textsc{view}_{\mathsf{Sim}'_i(I_i)} (P_i(I_i)) \equiv
  \textsc{view}_{\mathsf{FSCF}_n(I)} (P_i(I_i)) 
  $$
   $\mathsf{Sim}'_{1}$ is given in Simulator~\ref{alg:fsim}, and
   $\mathsf{Sim}'_{i}$ for  $1<i\leq n$ is given in
   Simulator~\ref{alg:fsimi}.

\makeatletter
\renewcommand{\ALG@name}{Simulator}
\makeatother
\begin{algorithm}[htb]
\caption{Algorithm $S_1$ of the protocol $\mathsf{Sim}'_1(I_1)$.}\label{alg:fsim}
\begin{algorithmic}[1]
  \Require $S_1$ knows $I_1=(n,B,p_1)$
  \State $S_1$ receives $(1+t_1)$ \cent{} from $P_1$; \label{si:ftfromp}
  \If{$1\leq (p_1-t_1)$} \label{si:fttop}
  \State{$S_1$ sends $(1 + B - (p_1-t_1) + (n-1)\cdot B))$ \cent{} to $P_1$;}) \label{fsPoneRecPos}
  \ElsIf{$(p_1-t_1) \leq 0$}
  \State{$S_1$ sends $(1 + ((t_1-p_1)\mod B) + (n-1)\cdot B))$ \cent{} to $P_1$;}\label{fsPoneRecNeg}
  \EndIf

  \If{$1\leq (p_1-t_1)$}
  \State{$x=n$;}
  \ElsIf{$(p_1-t_1) \leq 0$}
  \State{$x=n+\frac{(p_1-t_1)+((t_1-p_1)\mod B )-B}{B}$;}
  \EndIf
  
  \For{$j=1$ \To $x$}
        		\State $S_1$ makes the shared anonymous address
                        pay $B$ \cent{} to an anonymous
                        address; \label{si:ftake}
  \EndFor
\end{algorithmic}
\end{algorithm}

\begin{algorithm}[htb]
\caption{Algorithm $S_i$ of the protocol $\mathsf{Sim}'_i(I_i)$, where $1<i\leq n$.}\label{alg:fsimi}
\begin{algorithmic}[1]
  \Require $S_i$ knows $I_1=(n,B,p_i)$ 
  \State $t_{i-1} \stackrel{\$}{\longleftarrow}\left[0..B-1\right]$ ; \label{frandt}
  \State{$S_i$ sends $(1+t_{i-1}+(i-2) \cdot B)$ \cent{} to $P_i$;}
  \State{$S_i$ receives $(1+t_{i} + (i-1) \cdot B)$ \cent{} from $P_i$;}\label{si:ftfrompnn}
  
        \State $S_i$ sends $n\cdot B$ \cent{} to the shared anonymous address;  \label{si:fputi}
  
  \State{$x=n+\frac{p_i+t_{i-1}-t_i -B}{B}$;}
  \For{$j=1$ \To $x$}
        		\State $S_i$ makes the shared anonymous address
                        pay $B$ \cent{} to an anonymous
                        address;  \label{si:ftakei}
  \EndFor
\end{algorithmic}
\end{algorithm}

   We first show that the view of $P_1$ in the real protocol $\mathsf{FSCS}_n$ is the same as in the protocol $\mathsf{Sim}'_1$:
   \begin{itemize}
        \item At Instruction~\ref{si:ftfromp} of Simulator~\ref{alg:fsim}, $S_1$ receives $(1+t_1)$ \cent{} from $P_1$ such that $0\leq t_1 \leq B-1$, as at Instruction~\ref{fPoneSend} of Protocol~\ref{alg:merged}.
        \item At Instruction~\ref{fPnSend} of Protocol~\ref{alg:merged}, $P_n$ sends  $(1+t_n + (n-1)\cdot B)$ \cent{} to $P_1$ such that:
        \begin{itemize}
                \item $0\leq t_n \leq B $
                \item The balance of $P_1$ is a multiple of $B$.
        \end{itemize}
        We show that these two conditions hold in the simulator.
        At Instruction~\ref{fPoneBal} of Protocol~\ref{alg:merged}, the balance of $P_1$ is $(p_1- 1 -t_1)$. 
        \begin{enumerate}
            \item If $(p_1-t_1) \geq 1$, then $1 \leq (p_1-t_1) \leq  B$ and $S_1$ sends $ 1 + (B -(p_1-t_1)) + (n-1)\cdot B$ \cent{} to $P_1$. We then have:
            \begin{itemize}
                \item $0 \leq B- (p_1-t_1) \leq B-1$
                \item At Instruction~\ref{fsPoneRecPos} of Simulator~\ref{alg:fsim}, The balance of $P_1$ is $p_1 -1-t_1   +1+ (B - (p_1-t_1)+ (n-1)\cdot B =n\cdot B$, as at Instruction~\ref{fPnSend} of Protocol~\ref{alg:merged}, which is multiple of $B$.
            \end{itemize}
            \item If $(p_1-t_1)\leq 0$, then   $S_1$ sends $1+ ((t_1-p_1)\mod B) + (n-1)\cdot B$ \cent{} to $P_1$. We then have:
            \begin{itemize}
                \item $0\leq ((t_1-p_1)\mod B) \leq B-1$
                \item At Instruction~\ref{fsPoneRecNeg} of Simulator~\ref{alg:fsim}, The balance of $P_1$ is:
                $p_1- 1 - t_1 + 1 + ((t_1-p_1)\mod B) + (n-1)\cdot B  = 
                (p_1- t_1 ) + ((t_1-p_1)\mod B) +(n-1)\cdot B=    \left\lceil \frac{p_1-t_1}{B}  \right\rceil \cdot B + (n-1)\cdot B $, as at Instruction~\ref{fPnSend} of Protocol~\ref{alg:merged},
                which is a multiple of $B$.
            \end{itemize}
        \end{enumerate}
        \item $P_1$ sends $n\cdot B$ \cent{} to the shared anonymous address, so his balance is a multiple of B and is at most $0$, as at the end of Protocol~\ref{alg:merged}.
        \item At Instruction~\ref{frec} of Protocol~\ref{alg:onlythird}, the users  make the shared anonymous address
                        pay $B$ \cent{} to $n$ anonymous
                        addresses. 
        At Instruction~\ref{si:ftake} of Simulator~\ref{alg:fsim}, the balance of $P_1$ is:
        \begin{itemize}
            \item $0$ if $1\leq (p_1-t_1)$ (because $P_1$ had $n\cdot B$ \cent{} and sent $n\cdot B$ \cent{} to the shared address).
            \item Otherwise, the balance of $P_1$ is $\left\lceil \frac{p_1-t_1}{B}  \right\rceil \cdot B -B$. Hence $P_1$ receives $B$ \cent{} from the shared anonymous address $x=n+\frac{(p_1-t_1)+((t_1-p_1)\mod B )-B}{B}$ $=n+\left\lceil \frac{p_1-t_1}{B}  \right\rceil  -1$ times, and $S_1$  receives $B$ \cent{} from the shared anonymous address $n-x$
            times, as in Instruction~\ref{frecp} of Protocol~\ref{alg:onlythird}. 
            Finally, $P_1$ and $S_1$  make the shared anonymous address
                        pay $B$ \cent{} to $n$ anonymous
                        addresses.
        \end{itemize}
        \end{itemize}
        Finally, we deduce that the  view of $P_1$ in the real protocol $\mathsf{FSCS}_n$ is the the same as in the simulator $\mathsf{Sim}'_1$: 
         $$ \textsc{view}_{\mathsf{Sim}'_1(I_1)} (P_1(I_1)) \equiv
  \textsc{view}_{\mathsf{FSCS}_n(I)} (P_1(I_1)) 
  $$

   We then show that the view of $P_i$ in the real protocol $\mathsf{FSCS}_n$ is the same as in the protocol $\mathsf{Sim}'_i$ for any $1 \leq i \leq n$:
   \begin{itemize}
   
        \item At instruction~\ref{fPoneSend},~\ref{fPiSend} and~\ref{fPnSend} of Protocol~\ref{alg:merged}, each user $P_i$ receives $(1+t_{i-1} + (i-2)\cdot B )$ \cent{} from $P_{i-1}$ for any $1 \leq i \leq n$ such that $0\leq t_{i-1}\leq B-1$. As in $\mathsf{Sim}_i$ (see proof of Theorem~\ref{th:sec}) $P_i$ cannot distinguish whether $t_{i-1}$ was correctly generated or comes from the uniform distribution on $\{ 0,\ldots , B-1\}$. 
    
        \item At Instruction~\ref{si:ftfrompnn} of Simulator~\ref{alg:fsimi}, $S_i$ receives $(1+t_i + (i-1) \cdot B)$ \cent{} from $P_i$ such that $0\leq t_1 \leq B-1$, like at Instruction~\ref{fPoneSend} of Protocol~\ref{alg:merged}.
         \item At Instruction~\ref{si:fputi} of Simulator~\ref{alg:fsimi}, $S_i$ sends $n\cdot B$ \cent{} to the shared anonymous address, as $P_1$ at Instruction~\ref{fnBtoPB} of Protocol~\ref{alg:merged}.
         \item At Instruction~\ref{frec} of Protocol~\ref{alg:onlythird}, the users  make the shared anonymous address
                         pay $B$ \cent{} to $n$ anonymous
                         addresses. 
         At Instruction~\ref{si:ftakei} of Simulator~\ref{alg:fsimi}, the balance of $P_i$ 
         is $p_i+1+t_{i-1} +(i-2) \cdot B - 1-t_i - (i-1)\cdot B  = p_i+t_{i-1} - t_i -B $. Hence, setting $x=n+\frac{p_i+t_{i-1} - t_i -B}{B}$, $P_i$ receives $B$ \cent{} from the shared anonymous address $n-x=-\frac{p_i+t_{i-1} - t_i -B}{B}$ times, and $S_i$  receives $B$ \cent{} from the shared anonymous address $x$
             times.   Finally, $P_i$ and $S_i$  make the shared anonymous address
                         pay $B$ \cent{} to $n$ anonymous
                         addresses, as in $\mathsf{Sim}'_i$.
        \end{itemize}
        Finally, to conclude the proof, we deduce that for all $1\leq i \leq n$ the  view of $P_i$ in the real protocol $\mathsf{FSCS}_n$ is the the same as in the simulator $\mathsf{Sim}'_i$: 
         $$ \textsc{view}_{\mathsf{Sim}'_i(I_i)} (P_i(I_i)) \equiv
  \textsc{view}_{\mathsf{FSCS}_n(I)} (P_i(I_i)).\vspace{-15pt}$$
\qed\end{proof}

\subsection{Physical Variant}

Similarly, one can adapt the physical variant as follows.
For Protocol~\ref{alg:merged}, they perform a face to face transfer. 
At the end, the first player puts $n$ envelopes containing $B$~\cent{}
onto a stack in the secure room.
The last part is unchanged, all participants enter this secure room
one after the other 
and either does nothing or takes as many envelopes as needed to zero
out his credit.

The security proof of this physical variant also uses the simulator of
Theorem~\ref{th:fsec} for Protocol~\ref{alg:merged} and that of
Section~\ref{ssec:physical} for Protocol~\ref{alg:onlythird}.

\begin{remark}
Note that in the physical word, in some sense only $2\cdot n$ transactions
are required: $n$ face to face transfers and $n$ trips to the secure
room.
\end{remark}

\section{Conclusion}\label{sec:conclusion}

In this paper we showed that the \sharingExpensesProblem{} (\SEP{}) is \NP -complete.
Moreover, we devised two privacy-preserving protocols to share expenses in a Conspiracy Santa setting where members of a group offer each other gifts.

Our protocols ensure that no participant learns the price of his gift, while reducing the number of transactions compared to a naive solution, and not relying on a trusted third party.
We formally prove the security of our protocol and propose two
variants, one relying on cryptocurrencies for anonymous payments, the
other one using physical means, such as envelopes, to achieve anonymous payments.

Our protocol can also be used to share expenses among different groups
with non-empty intersections, while still ensuring that each
participant only learns the expenses of his group(s).

The next step is to design a practical implementation using Paypal or
a cryptocurrency. We would like to both have a platform on a website
and an application on smartphone, which however requires to solve
synchronization problems between them when not using a central
server (which was one of the design goals!).
There, we also would need also to fix random time points during an allowed
interval for transactions, set up some ciphered communications (at
setup), etc.

Finally, another avenue of research to explore could be the
determination of a lower bound on the number of transactions in a
secure peer-to-peer setting.

\section*{Acknowledgements}
\acknowledgements

\section*{References}
\bibliography{biblio}

\end{document}